\documentclass[12pt]{article}

\usepackage[T1]{fontenc}
\usepackage{kpfonts}

\usepackage{setspace}
\usepackage[margin=1.25in]{geometry}

\usepackage[dvipsnames]{xcolor}
\usepackage{pdfpages}
\usepackage{float}
\usepackage{multirow}
\usepackage{caption}
\usepackage{subcaption}
\usepackage{epigraph}

\usepackage{mathtools}
\usepackage{amsthm}
\usepackage{aliascnt}
\usepackage{accents}
\usepackage{dutchcal}
\usepackage{bbm}

\usepackage{enumitem}
\usepackage{sgame}
\usepackage{tikz}
\usepackage{tikz-cd}
\usetikzlibrary{calc,shapes,arrows}
\usepackage{tcolorbox}
\usepackage{listings}
\usepackage[normalem]{ulem}

\usepackage[round]{natbib}
\DeclareMathOperator*{\argmax}{arg\,max}

\newcommand{\E}{\mathbb{E}}

\renewcommand{\D}{\mathcal{D}}

\newtheorem{theorem}{Theorem}[section]

\newaliascnt{proposition}{theorem}

\aliascntresetthe{proposition}

\newaliascnt{lemma}{theorem}
\newtheorem{lemma}[lemma]{Lemma}
\aliascntresetthe{lemma}

\newaliascnt{corollary}{theorem}
\newtheorem{corollary}[corollary]{Corollary}
\aliascntresetthe{corollary}

\newaliascnt{claim}{theorem}

\aliascntresetthe{claim}

\theoremstyle{definition}

\newaliascnt{definition}{theorem}
\newtheorem{definition}[definition]{Definition}
\aliascntresetthe{definition}

\newaliascnt{example}{theorem}

\aliascntresetthe{example}

\newaliascnt{assumption}{theorem}

\aliascntresetthe{assumption}

\newaliascnt{condition}{theorem}

\aliascntresetthe{condition}

\newaliascnt{question}{theorem}

\aliascntresetthe{question}

\newaliascnt{remark}{theorem}

\aliascntresetthe{remark}

\newaliascnt{remarks}{theorem}

\aliascntresetthe{remarks}

\newaliascnt{aside}{theorem}

\aliascntresetthe{aside}

\newaliascnt{note}{theorem}

\aliascntresetthe{note}

\usepackage{thmtools}
\usepackage{thm-restate}

\usepackage{hyperref}

\hypersetup{
    colorlinks=true,
    linkcolor=OrangeRed,
    filecolor=Thistle,
    urlcolor=Thistle,
    citecolor=Thistle,
}

\usepackage[nameinlink]{cleveref}

\crefname{theorem}{theorem}{theorems}
\Crefname{theorem}{Theorem}{Theorems}
\crefname{proposition}{proposition}{propositions}
\Crefname{proposition}{Proposition}{Propositions}
\crefname{lemma}{lemma}{lemmas}
\Crefname{lemma}{Lemma}{Lemmas}
\crefname{corollary}{corollary}{corollaries}
\Crefname{corollary}{Corollary}{Corollaries}
\crefname{claim}{claim}{claims}
\Crefname{claim}{Claim}{Claims}
\crefname{definition}{definition}{definitions}
\Crefname{definition}{Definition}{Definitions}
\crefname{example}{example}{examples}
\Crefname{example}{Example}{Examples}
\crefname{assumption}{assumption}{assumptions}
\Crefname{assumption}{Assumption}{Assumptions}

\makeatletter
\let\cref@old@isrefconsecutive\cref@isrefconsecutive
\def\cref@isrefconsecutive#1#2{%
  \begingroup
    \def\cref@assumptiontype{assumption}%
    \cref@gettype{#1}{\cref@typea}%
    \ifx\cref@typea\cref@assumptiontype
      \endgroup
      \@cref@refconsecutivefalse
    \else
      \endgroup
      \cref@old@isrefconsecutive{#1}{#2}%
    \fi
}
\makeatother

\crefname{condition}{condition}{conditions}
\Crefname{condition}{Condition}{Conditions}
\crefname{question}{question}{questions}
\Crefname{question}{Question}{Questions}
\crefname{remark}{remark}{remarks}
\Crefname{remark}{Remark}{Remarks}
\crefname{remarks}{remarks}{remarks}
\Crefname{remarks}{Remarks}{Remarks}
\crefname{aside}{aside}{asides}
\Crefname{aside}{Aside}{Asides}
\crefname{note}{note}{notes}
\Crefname{note}{Note}{Notes}
\crefname{appendix}{appendix}{appendices}
\Crefname{appendix}{Appendix}{Appendices}

\newcommand{\secref}[1]{\hyperref[#1]{\S\ref*{#1}}}

\definecolor{backcolour}{rgb}{0.63, 0.79, 0.95}
\lstdefinestyle{mystyle}{
  backgroundcolor=\color{backcolour},
  basicstyle=\ttfamily\footnotesize,
  breakatwhitespace=false,
  breaklines=true,
  captionpos=b,
  keepspaces=true,
  numbers=left,
  numbersep=5pt,
  showspaces=false,
  showstringspaces=false,
  showtabs=false,
  tabsize=2
}
\begin{document}
\title{To Gamble, Perchance to Grow}
\author{Mark Whitmeyer\thanks{Arizona State University. Email: \href{mailto:mark.whitmeyer@gmail.com}{mark.whitmeyer@gmail.com}.}}
\date{\today}
\maketitle

\begin{abstract}
I study transformations of returns in the growth-optimal (Kelly) portfolio problem. In the one-safe-one-risky-asset problem, a return transform \(f\) universally produces a more conservative portfolio if and only if \(f\) is concave and strictly increasing and \(r/f\) is convex. As a corollary, I characterize comparative risk aversion for a rationally-inattentive agent: a more risk-averse agent is one who is sufficiently more risk averse in the \citet{Pratt1964} sense.
\end{abstract}

\section{Introduction}\label{sec:intro}

The more risk averse have utilities that are more concave: \(u \mapsto \varphi \circ u\) with \(\varphi\) concave and strictly increasing \citep{Pratt1964}. The strongly more risk averse have (rescaled) utilities that are additively more concave: \(u \mapsto \varphi \circ u\) with \(\lambda u - \varphi \circ u\) convex and strictly increasing for some \(\lambda > 0\) \citep{Ross1981}. The more prudent have utilities whose slopes are more convex: \(u \mapsto \varphi \circ u\) with \(\varphi'\) convex and strictly increasing \citep{kimball1990}. What transformations of returns make an investor in the growth-optimal portfolio problem \citep{Kelly1956} apparently more risk averse?

To elaborate, I consider an investor who maximizes expected \(\log\) growth \`{a} la \citet{Kelly1956}.\footnote{\citet{kellysurvey} surveys this literature.} For a safe gross return \(s>0\) and a risky gross return \(R>0\), the one-safe-one-risky growth problem is, letting \(\alpha\) denote the risky share,
\[\max_{\alpha\in \left[0,1\right]}\E\log((1-\alpha)s+\alpha R).\]
A transformation \(f\colon\left(0,\infty\right)\to\left(0,\infty\right)\) sends every asset payoff \(r\) to \(f(r)\), so that the transformed problem is
\[\max_{\alpha\in \left[0,1\right]}\E\log((1-\alpha)f(s)+\alpha f(R)).\]
I interpret \citet{Yaari1969}'s behavioral definition of comparative risk aversion as saying that a more risk-averse investor would always have a lower risky share. My question is: which transformations make the Kelly investor universally more conservative in this manner, for every risk-free return and every finite risky-return distribution?

\medskip

\noindent \Cref{thm:main-deterministic}. A transformation raises safe demand universally if and only if
\[
        f\text{ is strictly increasing and concave}
        \qquad\text{and}\qquad
        r\mapsto \frac{r}{f(r)}\text{ is convex}.
\]
Ordinary concavity is not enough. We also need the second harmonic curvature restriction.\footnote{The mirror characterization is the obvious inverse: a transformation raises risky demand universally if and only if \(f\) is strictly increasing and convex and \(r/f(r)\) is concave.}

Testing the demand order on degenerate risks forces strict monotonicity. Once strict monotonicity is in hand, the argument's central object is the following marginal object: after normalizing the safe return to one and writing \(Y=R/s\), the marginal \(\log\)-growth gain from shifting wealth away from the safe asset is
\[\frac{y-1}{1+\alpha(y-1)}.\]
Universal safe-demand monotonicity is equivalent to a single-crossing property of this score: at every portfolio share, the transformed score amplifies bad realizations, \(y<1\), at least as much as good ones, \(y>1\). The universal implications of this single-crossing condition are exactly concavity of \(f\) and convexity of \(r/f(r)\).

As a byproduct of my main theorem, I use the equivalence between the growth-optimal portfolio problem and the rational-inattention problem subject to Shannon costs identified by \citet{RobsonSamuelsonSteiner2023} to characterize comparative risk aversion for a rationally-inattentive agent. After transforming her utility via \(\varphi\), I say such an agent is more risk averse if the induced marginal probability of the safe action is weakly higher in every binary safe/risky problem. For a smooth utility transformation \(\varphi\),

\medskip

\noindent \Cref{cor:ri-c2-growth-rate-transform}. A transformation makes the safe choice more frequent if and only if
\[\varphi \text{ is strictly increasing}
        \qquad\text{and}\qquad
        \varphi''(u)\le-\left|\varphi'(u)\left(1-\varphi'(u)\right)\right|.\]
Ordinary concavity is not enough. We need the new utility to be sufficiently more concave.

Finally, motivated by \citet{curello2025outside}, I study \textit{ex-post} outside options: after an inside asset pays \(r\), the investor can instead take an outside option \(b\) (which is the realization of an \textit{ex-ante} random variable \(B\)). I show that, under a mild support condition, such outside options always make the investor weakly more willing to hold the risky asset. The same conclusion holds in the rational-inattention version, raising the probability of the risky action.

\smallskip

\noindent \textbf{Roadmap.} \secref{sec:score} introduces the model. \secref{sec:one-risky} proves \Cref{thm:main-deterministic}. \secref{sec:ri} provides the corollary for the rationally-inattentive agent. \secref{sec:random} investigates random transformations and contains the outside-option results. I discuss and place my findings in \secref{sec:discussion}. Omitted proofs lie in \Cref{app:omit}.

\section{Setup}\label{sec:score}

An investor faces two assets, one risk-free (deterministic), with a certain payoff of \(s>0\), and one risky. Formally, the risky asset is a strictly positive, finite-support random variable \(R\) distributed according to a full-support probability mass function (pmf) \(p\). Write \(Y\coloneqq R/s\) for the risky return relative to the safe return.

There is an investor who solves the (normalized) \citet{Kelly1956} portfolio problem
\[\max_{\alpha\in\left[0,1\right]} \E_p\log\left(1+\alpha\left(Y-1\right)\right).
\]
This objective is concave and differentiable, with derivative
\[
        \E_p\gamma_\alpha(Y),
        \qquad
        \text{where }
        \gamma_\alpha(y)\coloneqq\frac{y-1}{1+\alpha\left(y-1\right)}.
\]
The (normalized) demand correspondence is
\[
        \D(Y)\coloneqq
        \argmax_{\alpha\in\left[0,1\right]}
        \E_p\log\left(1+\alpha\left(Y-1\right)\right),
\]
which is the risky-share correspondence (after normalization).

A deterministic return transformation \(f\colon\left(0,\infty\right)\to\left(0,\infty\right)\) induces, at safe return \(s\), the relative return transformation
\[
        T_s^f(y)\coloneqq\frac{f(sy)}{f(s)}.
\]
The transformed risky-demand correspondence is \(\D\left(T_s^f(Y)\right)\).

Observe that in the normalized problem, if \(Y\not\equiv1\), then
\[
        \frac{d^2}{d\alpha^2}\E_p\log\left(1+\alpha\left(Y-1\right)\right)
        =
        -\E_p\left[
        \frac{\left(Y-1\right)^2}
        {\left(1+\alpha\left(Y-1\right)\right)^2}
        \right]<0,
\]
so \(\D(Y)\) is a singleton. When \(Y\not\equiv1\), we identify \(\D(Y)\) with its unique element.

For a nonempty correspondence \(A\subseteq\left[0,1\right]\) and scalar \(x\), write \(A\leq x\) to mean \(\sup A\leq x\), and \(A\geq x\) to mean \(\inf A\geq x\).

\begin{definition}
A deterministic transformation \(f\) \emph{raises safe demand universally} if \(\sup \D\left(T_s^f(Y)\right)\leq\D(Y)\) for all \(s > 0\), for all \(Y \not\equiv 1\). It \emph{raises risky demand universally} if \(\inf \D\left(T_s^f(Y)\right)\geq\D(Y)\) for all \(s > 0\), for all \(Y \not\equiv 1\).
\end{definition}

\section{Deterministic Characterization}\label{sec:one-risky}

For \(y\neq1\), define the \emph{score amplification ratio}
\[k_{\alpha,s}^f(y)\coloneqq \frac{\gamma_\alpha(T_s^f(y))}{\gamma_\alpha(y)}.
\]

\begin{lemma}
\label{lem:score-single-crossing}
Let \(f\colon\left(0,\infty\right)\to\left(0,\infty\right)\) be strictly increasing. Then \(f\) raises safe demand universally if and only if \(k_{\alpha,s}^f(a)\ge k_{\alpha,s}^f(b)\) for all \(a < 1 < b\), for all \(\alpha \in \left[0,1\right]\), and for all \(s > 0\). It raises risky demand universally if and only if the reverse inequality holds for all such \(a,b,\alpha,s\).
\end{lemma}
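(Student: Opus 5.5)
Write \(\tilde Y\coloneqq T_s^f(Y)\), \(\tilde y\coloneqq T_s^f(y)\), and \(\alpha^*\coloneqq\D(Y)\). The plan is to reduce demand comparisons to the sign of the first-order derivative at \(\alpha^*\), and then show that this sign condition, imposed over all finite risks, is exactly a single-crossing condition on the ratio \(k\). Since \(f\) is strictly increasing, \(\tilde y<1\) iff \(y<1\), \(\tilde y>1\) iff \(y>1\), and \(\tilde y=1\) iff \(y=1\). Hence \(\gamma_\alpha(\tilde y)\) has the same sign as \(\gamma_\alpha(y)\), and \(k_{\alpha,s}^f(y)>0\) for every \(y\neq1\).

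Because the transformed objective \(\alpha\mapsto\E_p\log(1+\alpha(\tilde Y-1))\) is concave, \(\sup\D(\tilde Y)\le\alpha^*\) holds iff its right derivative at \(\alpha^*\) is \(\le 0\) when \(\alpha^*<1\); the case \(\alpha^*=1\) is trivial. Since \(\tilde Y\not\equiv1\) whenever \(Y\not\equiv1\), the transformed demand is a singleton, and the condition is simply \(\E_p\gamma_{\alpha^*}(\tilde Y)\le0\), with the usual corner adjustments at \(0\). The first-order condition for \(Y\) gives \(\E_p\gamma_{\alpha^*}(Y)=0\) if \(\alpha^*\in(0,1)\), and \(\E_p\gamma_0(Y)\le0\) if \(\alpha^*=0\).

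\emph{Sufficiency.} Assume \(k_{\alpha,s}^f(a)\ge k_{\alpha,s}^f(b)\) for all \(a<1<b\). Then there is a constant \(c>0\) with \(k\ge c\) on \(\{y<1\}\) and \(k\le c\) on \(\{y>1\}\); take, for example, \(c=\inf_{a<1}k(a)\). Multiplying by \(\gamma_\alpha(y)\), which is negative for \(y<1\) and positive for \(y>1\), gives \(\gamma_\alpha(\tilde y)\le c\,\gamma_\alpha(y)\) pointwise, including at \(y=1\). Taking expectations at \(\alpha=\alpha^*\) yields \(\E_p\gamma_{\alpha^*}(\tilde Y)\le c\,\E_p\gamma_{\alpha^*}(Y)\). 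This is \(\le0\) in the interior case, and also in the \(\alpha^*=0\) case because \(c>0\). Therefore \(\sup\D(\tilde Y)\le\alpha^*\).

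\emph{Necessity.} Fix \(a<1<b\), \(\alpha\in[0,1]\) and \(s\), and build a two-point risk \(Y\) on \(\{a,b\}\) with probabilities \(p,1-p\) chosen so that \(\D(Y)=\alpha\). This requires \(p\gamma_\alpha(a)+(1-p)\gamma_\alpha(b)=0\). Since \(\gamma_\alpha(a)<0<\gamma_\alpha(b)\), the solution is \(p=\gamma_\alpha(b)/(\gamma_\alpha(b)-\gamma_\alpha(a))\in(0,1)\); the endpoints \(\alpha\in\{0,1\}\) are covered because \(1+\alpha(y-1)>0\) for \(y>0\). Universal safe demand then forces \(\E\gamma_\alpha(\tilde Y)\le0\), at least for \(\alpha<1\), via strict concavity. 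Dividing by \(p\gamma_\alpha(b)>0\) and using \(\gamma_\alpha(a)=-\frac{1-p}{p}\gamma_\alpha(b)\) turns this into \(k(b)\le k(a)\).

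The step I expect to be the main obstacle is the boundary \(\alpha=1\). There the inequality \(\sup\D(\tilde Y)\le1\) is vacuous, so the construction gives no information. I will obtain it by continuity: \(k_{\alpha,s}^f(y)\) is continuous in \(\alpha\) because the denominators are positive, so I take the limit \(\alpha\uparrow1\). I must also handle \(\alpha^*=0\) carefully in sufficiency, which the constant-\(c\) argument does.

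The risky-demand statement follows by the same argument with every inequality reversed. In necessity, the case \(\alpha=0\) is then the vacuous one, and it is handled by the limit \(\alpha\downarrow0\).
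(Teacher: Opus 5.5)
Your proposal is correct and follows essentially the same route as the paper. Sufficiency uses the pointwise bound $\gamma_\alpha(\tilde y)\le c\,\gamma_\alpha(y)$ evaluated at the original optimum; necessity uses a two-point risk on $\{a,b\}$ calibrated so that $\E_p\gamma_\alpha(Y)=0$, plus continuity in $\alpha$ to recover the vacuous endpoint. The differences are cosmetic: your global choice $c=\inf_{a<1}k(a)$ avoids the paper's case split on the support of $Y$, and you argue necessity directly rather than by contradiction.
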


The single-crossing condition says the transformation strengthens bad-realization reasons \textit{against} risk at least as much as good-realization reasons \textit{for} risk; therefore, at the original optimum, the transformed marginal incentive to add risk is still nonpositive, so the transformed optimum cannot move to the right.

Moreover, it turns out that we need only check extreme portfolios:
\begin{lemma}
\label{lem:endpoint-reduction}
For fixed \(a<1<b\) and \(s>0\), the inequality \(k_{\alpha,s}^f(a)\ge k_{\alpha,s}^f(b)\) holds for every \(\alpha\in \left[0,1\right]\) if and only if it holds at \(\alpha=0\) and \(\alpha=1\). The same is true with the reverse inequality.
\end{lemma}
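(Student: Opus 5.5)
The plan is to clear denominators in the inequality \(k_{\alpha,s}^f(a)\ge k_{\alpha,s}^f(b)\) and show that what remains is \emph{affine} in \(\alpha\). An affine function is nonnegative on \(\left[0,1\right]\) if and only if it is nonnegative at both endpoints, which gives the claim.

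Fix \(a<1<b\) and \(s>0\), and write \(t_y\coloneqq T_s^f(y)=f(sy)/f(s)\).

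First I record signs; here I use that \(f\) is strictly increasing, as in \Cref{lem:score-single-crossing}. Strict monotonicity gives \(t_a<1<t_b\). Hence the slopes \(c_a\coloneqq (t_a-1)/(a-1)\) and \(c_b\coloneqq(t_b-1)/(b-1)\) are strictly positive. Also, for any \(x>0\) and \(\alpha\in\left[0,1\right]\), we have \(1+\alpha(x-1)=(1-\alpha)+\alpha x>0\).

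Expanding the definition of \(\gamma_\alpha\), for \(y\neq 1\),
\[
k_{\alpha,s}^f(y)=c_y\,\frac{1+\alpha(y-1)}{1+\alpha(t_y-1)},
\]
where every factor is positive. Multiplying through by the positive quantity \(\bigl(1+\alpha(t_a-1)\bigr)\bigl(1+\alpha(t_b-1)\bigr)\), the inequality \(k_{\alpha,s}^f(a)\ge k_{\alpha,s}^f(b)\) is equivalent to \(\Delta(\alpha)\ge0\), where
\[
\Delta(\alpha)\coloneqq c_a\bigl(1+\alpha(a-1)\bigr)\bigl(1+\alpha(t_b-1)\bigr)-c_b\bigl(1+\alpha(b-1)\bigr)\bigl(1+\alpha(t_a-1)\bigr).
\]
A priori \(\Delta\) is a quadratic polynomial in \(\alpha\).

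The key step is to compute its \(\alpha^2\) coefficient:
\[
c_a(a-1)(t_b-1)-c_b(b-1)(t_a-1)=(t_a-1)(t_b-1)-(t_b-1)(t_a-1)=0.
\]
So \(\Delta\) is affine in \(\alpha\). It is therefore nonnegative on \(\left[0,1\right]\) if and only if \(\Delta(0)\ge0\) and \(\Delta(1)\ge0\). Because the multiplication above preserves the inequality for each \(\alpha\), this is the same as the original inequality holding at \(\alpha=0\) and \(\alpha=1\).

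The reverse inequality is handled by the same argument with \(\Delta\le0\): an affine function is nonpositive on an interval if and only if it is nonpositive at the endpoints.

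The only point needing care is the sign bookkeeping that justifies clearing denominators without flipping the inequality. This relies on strict monotonicity of \(f\), which makes \(c_a,c_b>0\), and on strict positivity of \(f\), which makes \(t_y>0\) and hence every \(1+\alpha(t_y-1)>0\). The cancellation of the quadratic term is the substantive observation, and it is a one-line check.
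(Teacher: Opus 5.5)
Your proof is correct and is essentially the paper's: both write \(k_{\alpha,s}^f(y)=c_y\,\frac{1+\alpha(y-1)}{1+\alpha(t_y-1)}\), clear the positive denominators, and conclude because the resulting difference is affine in \(\alpha\). You also usefully check the cancellation of the \(\alpha^2\) term, which the paper leaves implicit. One minor remark: strict monotonicity of \(f\) is not needed here, because you only multiply by \(\bigl(1+\alpha(t_a-1)\bigr)\bigl(1+\alpha(t_b-1)\bigr)\), which is positive by positivity of \(f\) alone, so the signs of \(c_a,c_b\) play no role.
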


Combining these lemmata, proved in \Cref{app:firstlems}, we have:

\begin{theorem}
\label{thm:main-deterministic}
The function \(f\colon\left(0,\infty\right)\to\left(0,\infty\right)\) raises safe demand universally if and only if \(f\) is strictly increasing, concave, and \(r/f(r)\) is convex. It raises risky demand universally if and only if \(f\) is strictly increasing, convex, and \(r/f(r)\) is concave.
\end{theorem}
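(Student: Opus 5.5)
The plan is to reduce everything to the two endpoint portfolios via \Cref{lem:score-single-crossing} and \Cref{lem:endpoint-reduction}, and then read off the two curvature conditions from chord-slope inequalities. Before either lemma can be used, strict monotonicity must be shown to be necessary, since \Cref{lem:score-single-crossing} assumes it.

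\textbf{Step 1: strict monotonicity is necessary.} I test the demand order on degenerate risks. Take \(Y\equiv y\) with \(y<1\), so \(\D(Y)=0\). If \(f(sy)\ge f(s)\), then \(T_s^f(Y)\equiv f(sy)/f(s)\ge 1\).
\begin{itemize}
\item If \(f(sy)>f(s)\), the transformed demand is \(1\).
\item If \(f(sy)=f(s)\), the transformed demand is all of \(\left[0,1\right]\).
\end{itemize}
In both cases \(\sup \D\left(T_s^f(Y)\right)=1>0\), so \(f\) does not raise safe demand universally. Hence raising safe demand forces \(f(sy)<f(s)\) whenever \(y<1\), for every \(s\), which is strict monotonicity. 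Symmetrically, for the risky version take \(Y\equiv y>1\), so \(\D(Y)=1\). Then \(\inf\D(T_s^f(Y))\ge 1\) forces \(f(sy)>f(s)\). With strict monotonicity in hand, \Cref{lem:score-single-crossing} and \Cref{lem:endpoint-reduction} apply. The whole problem becomes: \(k^f_{\alpha,s}(a)\ge k^f_{\alpha,s}(b)\) for all \(a<1<b\) and \(s>0\), at \(\alpha=0\) and at \(\alpha=1\).

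\textbf{Step 2: \(\alpha=0\) gives concavity.} Here \(\gamma_0(y)=y-1\), so
\[k^f_{0,s}(y)=\frac{s}{f(s)}\cdot\frac{f(sy)-f(s)}{sy-s}.\]
Up to the positive factor \(s/f(s)\), this is the slope of the chord of \(f\) between \(s\) and \(sy\). The condition therefore says that, at every point \(s\), every chord slope to the left of \(s\) is at least every chord slope to the right. As \(s\), \(a<1\) and \(b>1\) range freely, this covers all triples \(x_1<x_2<x_3\) with \(x_2=s\). By the three-chord lemma it is exactly concavity of \(f\); the reverse inequality is exactly convexity.

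\textbf{Step 3: \(\alpha=1\) gives convexity of \(r/f(r)\).} Here \(\gamma_1(y)=1-1/y\). Change variables to \(t=1/r\) and set \(\phi(t)\coloneqq 1/f(1/t)\). Writing \(u_0=1/s\) and \(u=1/(sy)\), we have \(f(s)/f(sy)=\phi(u)/\phi(u_0)\) and \(1/y=u/u_0\), so
\[k^f_{1,s}(y)=\frac{u_0}{\phi(u_0)}\cdot\frac{\phi(u_0)-\phi(u)}{u_0-u}.\]
This is again a positive multiple of a chord slope, now of \(\phi\) at \(u_0\). The reciprocal map reverses the sides of the base point:
\begin{itemize}
\item \(a<1\) corresponds to \(u>u_0\), a chord to the right of \(u_0\);
\item \(b>1\) corresponds to \(u<u_0\), a chord to the left of \(u_0\).
\end{itemize}
So \(k(a)\ge k(b)\) says right chord slopes dominate left chord slopes at every point, which by the same three-chord argument is convexity of \(\phi\).

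Finally, write \(g(r)=r/f(r)\). Then \(\phi(t)=t\,g(1/t)\) is the perspective of \(g\) (in one dimension, with the first argument fixed at \(1\)). A function on \((0,\infty)\) is convex if and only if its perspective is, and applying the transformation twice returns \(g\). Hence \(\phi\) is convex if and only if \(r/f(r)\) is convex; the reverse inequality gives concavity of both. Combining Steps 1 through 3 yields both halves of the theorem.

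The main obstacle is Step 3: finding the change of variables that turns the \(\alpha=1\) ratio into a chord slope, and tracking the side reversal induced by \(r\mapsto 1/r\). I would check the perspective equivalence directly, either by the second-derivative identity \(\phi''(t)=t^{-3}g''(1/t)\) when \(g\) is smooth, or by a chord computation otherwise. The rest of the argument is bookkeeping given the two lemmas.
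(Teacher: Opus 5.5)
Your proposal is correct and is essentially the paper's proof: degenerate risks force strict monotonicity, \Cref{lem:score-single-crossing} and \Cref{lem:endpoint-reduction} reduce everything to $\alpha\in\{0,1\}$, and the $\alpha=0$ endpoint is the three-chord characterization of concavity of $f$. The only deviation is at $\alpha=1$: the paper avoids your reciprocal change of variables and the perspective argument by writing directly $k^f_{1,s}(y)=1-\frac{s}{q(s)}\frac{q(sy)-q(s)}{sy-s}$ with $q(r)=r/f(r)$, and reads convexity of $q$ off the chord slopes at $s$ (your route is equivalent, since the chord slope of $\phi(t)=t\,q(1/t)$ equals the intercept at zero of the corresponding chord of $q$).
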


\begin{proof}
We prove the raising-safe-demand statement. 

Suppose \(f\) raises safe demand universally. It is easy to see that \(f\) must be strictly increasing: else, we could find some deterministic \(r < s\) with \(f(r)\ge f(s)\). Take safe return \(s\) and deterministic risky return \(R\equiv r\), so \(Y\equiv r/s<1\) and \(\D(Y)=0\). But \(T_s^f(Y)=\frac{f(r)}{f(s)}\ge1\), so \(\sup\D\left(T_s^f(Y)\right)=1>0=\D(Y)\), contradicting universal safe-demand increase. We conclude that \(f\) is strictly increasing. 

\Cref{lem:score-single-crossing} now applies. Moreover, by \Cref{lem:endpoint-reduction}, it suffices to only analyze the endpoints. At \(\alpha=0\),
\[
        k_{0,s}^f(y)=\frac{T_s^f(y)-1}{y-1}
        =\frac{s}{f(s)}\frac{f(sy)-f(s)}{sy-s}.
\]
For \(r<s<t\), take \(a=r/s\) and \(b=t/s\). Then,
\[k_{0,s}^f(a)\ge k_{0,s}^f(b) \qquad \iff \qquad \frac{f(s)-f(r)}{s-r} \ge \frac{f(t)-f(s)}{t-s},\]
which is the three-point slope characterization of concavity of \(f\) \citep[Theorem 1: d \(\iff\) e]{Pratt1964}. Thus, the \(\alpha=0\) endpoint is equivalent to concavity of \(f\).

At \(\alpha=1\), let \(q(r)=r/f(r)\). Since
\[k_{1,s}^f(y) = \frac{y}{T_s^f(y)}\frac{T_s^f(y)-1}{y-1} = 1-\frac{s}{q(s)}\frac{q(sy)-q(s)}{sy-s},\]
\[k_{1,s}^f(a)\ge k_{1,s}^f(b)  \qquad \iff \qquad \frac{q(s)-q(r)}{s-r} \le \frac{q(t)-q(s)}{t-s},\]
which is convexity of \(q\), also thanks to \citet[Theorem 1: d \(\iff\) e]{Pratt1964}. In sum, universal safe-demand increase implies strict increase of \(f\), concavity of \(f\), and convexity of \(r/f(r)\). Conversely, if these three properties hold, the two endpoint inequalities hold; \Cref{lem:endpoint-reduction} delivers the single-crossing condition for every \(\alpha\); and \Cref{lem:score-single-crossing}, the universal safe-demand increase.

The risky-demand statement is the mirror image.
\end{proof}

If \(f\) is \(C^2\), \Cref{thm:main-deterministic} says that safe-demand monotonicity is equivalent to strict increase of \(f\) together with \(f''(r)\le0\) and \(\left(\frac{r}{f(r)}\right)''\ge0\). For a \(C^2\) and strictly-increasing \(f\), in log-return coordinates \(\Phi(x)\coloneqq\log f(\mathrm e^x)\), the two curvature inequalities become
\[
        \Phi''(x)\le-
        \left|\Phi'(x)(1-\Phi'(x))\right|.
\]

An example of a transformation that robustly raises safe demand is \(f(r)=r/(1+r)\). \(f\) is increasing and concave, while \(r/f(r)=1+r\) is affine, hence, convex. A transformation that robustly raises risky demand is \(f(r)=r+c\) with \(c>0\). \(f\) is increasing and affine, hence, convex, while \(r/f(r)=r/(r+c)\) is concave.

To see that mere concavity of the transformation is not enough, let \(s=1\) and let the risky return be \(R=1/2\) or \(R=4\) with equal probabilities. The original risky share is \(5/6\). The transformation \(f(r)=\sqrt r\) is concave, but \(r/f(r)=\sqrt r\) is concave, not convex. After transformation, the relative risky returns are \(\sqrt{1/2}\) and \(2\), and the optimum is \(1\), i.e., the safe share falls.

\section{Risk-Aversion for the Rationally Inattentive}\label{sec:ri} 

\citet[Proposition 2]{RobsonSamuelsonSteiner2023} reveal that the growth-optimal portfolio problem is equivalent to the classic rational-inattention one. In their notation, there is a finite state-space \(\Theta\), with at least two elements. Action \(a\in A\) yields state-dependent utility \(u(a,\theta)\), thus, gross return \(\mathrm e^{u(a,\theta)}\). An optimal portfolio \(\alpha\in\Delta(A)\) solves
\[
        \max_{\alpha\in\Delta(A)}
        \E_p\log\left(\E_\alpha \mathrm e^{u(a,\theta)}\right),
\]
and the associated rational-inattention problem is
\[
        \max_{q\in\Delta(A)^\Theta}
        \left\{
        \E_{p,q}u(a,\theta)-I_{p,q}(a;\theta)
        \right\},
\]
where \(q(a\mid\theta)\) is a state-contingent stochastic choice rule, \(\E_{p,q}\) denotes the expectation under \(p(\theta)q(a\mid\theta)\), and \(I_{p,q}(a;\theta)\) is the mutual information between actions and states. \citet{RobsonSamuelsonSteiner2023} show that a marginal action distribution\footnote{We say that a marginal action distribution \(\alpha\in\Delta(A)\) is induced by a solution of the rational-inattention problem if \(\alpha(a)=\E_p q(a\mid\theta)\) for some optimal rule \(q\).} \(\alpha\) is induced by a solution of the rational-inattention problem if and only if it solves the growth-optimal portfolio problem above.

Again we specialize to two actions, with one risk free or ``safe.'' The safe action \(0\) has state-independent growth rate \(u_0\), and the risky action \(1\) has state-dependent growth rate \(u_R(\theta)\). Write \(s\coloneqq \mathrm e^{u_0}\), and \(R_\theta\coloneqq \mathrm e^{u_R(\theta)}\). For a growth-rate transformation \(\varphi\colon\mathbb{R}\to\mathbb{R}\), define the set of induced marginal probabilities of the risky action by
\[
        \D_\varphi^{RI}(u_0,u_R)
        \coloneqq
        \argmax_{\alpha\in\left[0,1\right]}
        \E_p\log\left((1-\alpha)e^{\varphi(u_0)}+\alpha \mathrm e^{\varphi(u_R(\theta))}\right).
\]
Let \(\iota(u)\coloneqq u\). For \(u_R\not\equiv u_0\), \(\D_\iota^{RI}(u_0,u_R)\) is singleton; so, as before, we identify it with its unique element.

\begin{definition}
A deterministic transformation \(\varphi\) \emph{raises safe action demand universally} if \(\sup \D_\varphi^{RI}(u_0,u_R)\le \D_\iota^{RI}(u_0,u_R)\) for all \(u_0 \in \mathbb{R}\), \(u_R\not\equiv u_0\). It \emph{raises risky action demand universally} if \(\inf \D_\varphi^{RI}(u_0,u_R)\ge \D_\iota^{RI}(u_0,u_R)\) for all \(u_0 \in \mathbb{R}\), \(u_R\not\equiv u_0\).
\end{definition}

\begin{corollary}
\label{cor:ri-growth-rate-transform} Transformation \(\varphi\) raises (lowers) safe action demand universally if and only if \(r \mapsto \exp\left(\varphi\left(\log r\right)\right)\) is strictly increasing and concave (convex) and \(r \mapsto r\exp\left(-\varphi\left(\log r\right)\right)\) is convex (concave).
\end{corollary}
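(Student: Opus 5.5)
The plan is to reduce the corollary directly to \Cref{thm:main-deterministic} through the change of variables \(f(r)\coloneqq \exp\left(\varphi\left(\log r\right)\right)\). This map sends \(\left(0,\infty\right)\) to \(\left(0,\infty\right)\). It satisfies \(f(\mathrm e^{u})=\mathrm e^{\varphi(u)}\) for every \(u\in\mathbb{R}\), and \(r/f(r)=r\exp\left(-\varphi\left(\log r\right)\right)\). Given this, the two conditions in the corollary are exactly the conditions of \Cref{thm:main-deterministic} applied to \(f\), so it remains only to match the two demand notions.

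First, for arbitrary \(u_0\) and \(u_R\), set \(s=\mathrm e^{u_0}\) and \(R_\theta=\mathrm e^{u_R(\theta)}\), and let \(Y=R/s\). The transformed objective is
\[\E_p\log\left((1-\alpha)f(s)+\alpha f(R)\right)=\log f(s)+\E_p\log\left(1+\alpha\left(T_s^f(Y)-1\right)\right).\]
It therefore has the same argmax as the normalized problem, and \(\D_\varphi^{RI}(u_0,u_R)=\D\left(T_s^f(Y)\right)\). Similarly, \(\D_\iota^{RI}(u_0,u_R)=\D(Y)\). The condition \(u_R\not\equiv u_0\) is the same as \(Y\not\equiv1\).

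Second, the correspondence between parameters is onto in both directions. Every pair \(s>0\) and strictly positive finite-support \(R\) with full-support \(p\) arises from some \(u_0\in\mathbb{R}\), a finite \(\Theta\), and some \(u_R\), by taking logs. Here I must take a little care with the requirement \(|\Theta|\ge2\). A deterministic \(R\) can be represented on two states that carry the same value, and since \(\D\) depends only on the distribution of \(Y\), the representation does not matter. Conversely, every RI instance yields such an \((s,R)\). Hence ``\(\varphi\) raises safe action demand universally'' is literally the statement ``\(f\) raises safe demand universally,'' and likewise for risky demand.

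Applying \Cref{thm:main-deterministic} to \(f\) then gives the result. \(\varphi\) raises safe action demand universally if and only if \(f\) is strictly increasing and concave and \(r/f(r)\) is convex. It raises risky action demand universally if and only if \(f\) is strictly increasing and convex and \(r/f(r)\) is concave. I read the parenthetical ``lowers'' in the statement as ``raises risky action demand''; this is the mirror half of the theorem.

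The main point requiring care is this equivalence of instance classes. Note that \Cref{thm:main-deterministic} quantifies over all \(s\) and all finite \(Y\), including degenerate \(Y\); the degenerate case is what forces strict monotonicity of \(f\). So I will check explicitly that degenerate risky returns are admissible in the RI formulation, via a constant \(u_R\neq u_0\) on a state space with at least two states. Beyond that, the argument is a verbatim translation.
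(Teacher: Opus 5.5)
Your proposal is correct and takes the paper's route: the paper's proof is the one-line observation that \(\varphi\) acting on growth rates is exactly the gross-return transformation \(g_\varphi(r)=\exp(\varphi(\log r))\), followed by an appeal to \Cref{thm:main-deterministic}. Your additional checks fill in details the paper leaves implicit. These are that the transformed objective differs from the normalized one only by the constant \(\log f(s)\), and that degenerate risky returns are admissible on a state space with at least two states.
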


\begin{proof}
The growth-rate transformation \(u\mapsto\varphi(u)\) is exactly the gross-return transformation \(r\mapsto g_\varphi(r)\), so apply \Cref{thm:main-deterministic}.
\end{proof}

\begin{corollary}
\label{cor:ri-c2-growth-rate-transform}
Suppose \(\varphi\in C^2\). Then \(\varphi\) raises safe action demand universally if and only if \(\varphi\) is strictly increasing and
\[
        \varphi''(u)\le-\left|\varphi'(u)\left(1-\varphi'(u)\right)\right|
        \qquad
        \forall u\in\mathbb{R}.
\]
It raises risky action demand universally if and only if \(\varphi\) is strictly increasing and
\[
        \varphi''(u)\ge\left|\varphi'(u)\left(1-\varphi'(u)\right)\right|
        \qquad
        \forall u\in\mathbb{R}.
\]
\end{corollary}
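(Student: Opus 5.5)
The plan is to reduce to \Cref{cor:ri-growth-rate-transform} and translate its two curvature conditions into log coordinates. Set \(g(r)\coloneqq\exp(\varphi(\log r))\), so that \(\varphi(x)=\log g(\mathrm e^x)\); this \(\varphi\) is the function \(\Phi\) of the remark after \Cref{thm:main-deterministic}. Since \(\log\) and \(\exp\) are strictly increasing bijections, \(g\) is strictly increasing if and only if \(\varphi\) is. By \Cref{cor:ri-growth-rate-transform}, it therefore suffices to show that, for strictly increasing \(\varphi\in C^2\), concavity of \(g\) together with convexity of \(q(r)\coloneqq r/g(r)=\exp(\log r-\varphi(\log r))\) is equivalent to \(\varphi''\le-|\varphi'(1-\varphi')|\) pointwise. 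The risky statement is the mirror image.

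\textbf{Step 1: concavity of \(g\).} Write \(x=\log r\). Then \(g'(r)=\varphi'(x)g(r)/r\), and differentiating once more gives
\[
g''(r)=\frac{g(r)}{r^2}\left(\varphi''(x)+\varphi'(x)^2-\varphi'(x)\right).
\]
Because \(g\) is \(C^2\) and \(g/r^2>0\), concavity of \(g\) is equivalent to
\[
\varphi''\le\varphi'(1-\varphi') \quad \text{for all } x.
\]

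\textbf{Step 2: convexity of \(q\).} Apply the same computation to \(\psi(x)\coloneqq x-\varphi(x)\), for which \(q(r)=\exp(\psi(\log r))\). This gives
\[
q''(r)=\frac{q(r)}{r^2}\left(\psi''+\psi'^2-\psi'\right).
\]
Substitute \(\psi'=1-\varphi'\) and \(\psi''=-\varphi''\). Then
\[
\psi'^2-\psi'=(1-\varphi')(-\varphi')=-\varphi'(1-\varphi').
\]
Hence \(q''\ge0\) if and only if
\[
-\varphi''-\varphi'(1-\varphi')\ge0, \quad\text{that is,}\quad \varphi''\le-\varphi'(1-\varphi').
\]

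\textbf{Step 3: combine.} Together, the inequalities of Steps 1 and 2 say \(\varphi''\le\min\{\varphi'(1-\varphi'),-\varphi'(1-\varphi')\}=-|\varphi'(1-\varphi')|\). For the risky side, the reversed inequalities give \(\varphi''\ge|\varphi'(1-\varphi')|\).

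There is no real obstacle here. The only point needing care is that convexity and concavity of the \(C^2\) functions \(g\) and \(q\) on \((0,\infty)\) are equivalent to the sign of their second derivatives at every point. The map \(x=\log r\) is a bijection \((0,\infty)\to\mathbb{R}\), so the pointwise conditions in \(r\) translate to all \(u\in\mathbb{R}\). Strict monotonicity is not implied by the curvature conditions, so it is carried along unchanged from \Cref{cor:ri-growth-rate-transform}.
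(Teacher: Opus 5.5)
Your proposal is correct and follows the paper's proof essentially line for line. You reduce to \Cref{cor:ri-growth-rate-transform}, compute \(g_\varphi''\) in log coordinates to get \(\varphi''\le\varphi'(1-\varphi')\), do the analogous computation for \(r/g_\varphi(r)\) to get \(\varphi''\le-\varphi'(1-\varphi')\), take the minimum, and carry strict monotonicity along; your substitution \(\psi(x)=x-\varphi(x)\) just makes explicit the ``similar direct differentiation'' that the paper leaves to the reader.
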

We conduct the straightforward differentiation needed to prove this corollary in \Cref{ap:cor43proof}. Note that even linear transformations fail this criterion, unless they have slope \(1\). This is not paradoxical: we are not altering the cost function in the rational-inattention problem, so the ``scale'' of \(u\) matters.\footnote{Indeed, scaling utilities is equivalent to scaling  the information-acquisition cost, which can obviously alter the resulting stochastic choice rule.}

\section{Random Transformations}\label{sec:random} 

Let \(Z\) be an exogenous random variable that is independent of asset returns, distributed according to \(\mu\). A random transformation \(F = \left(F_z\right)_z\) is a family \(F_z\colon\left(0,\infty\right)\to\left(0,\infty\right)\) that is weakly increasing for \(\mu\)-almost every realization. Assume \((z,r)\mapsto F_z(r)\) is measurable and that all expectations below are finite, with differentiation under the integral valid. 

At safe return \(s\), the random transformation induces the relative-return map
\[T_{z,s}(y)\coloneqq\frac{F_z(sy)}{F_z(s)}.\]
The transformed normalized objective is, thus,
\[G_F(\alpha;s,Y)\coloneqq\E_{p,\mu}\log\left(1+\alpha\left(T_{Z,s}(Y)-1\right)\right),
\]
and the transformed risky-demand correspondence is
\[\D_F(s,Y)\coloneqq\argmax_{\alpha\in\left[0,1\right]}G_F(\alpha;s,Y).\]
The derivative of the transformed objective is
\(G_F'(\alpha;s,Y)=\E_p h_{\alpha,s}^{F}(Y)\),
where \(h_{\alpha,s}^{F}(y)\coloneqq\E_\mu\gamma_\alpha\left(T_{Z,s}(y)\right)\).

For \(y\neq1\), define the \textit{random score amplification ratio} \(k_{\alpha,s}^{F}(y)\coloneqq\frac{h_{\alpha,s}^{F}(y)}{\gamma_\alpha(y)}\). Because \(F_z\) is weakly increasing a.s. and \(T_{z,s}(1)=1\), the transformed score has the same sign as the original score. Hence, \(k_{\alpha,s}^{F}(y)\ge0\) for \(y\neq1\).

\begin{definition} The random transformation \((F_z)_z\) \emph{raises safe demand universally} if \(\sup\D_F(s,Y)\le\D(Y)\) for all \(s>0\) and \(Y\not\equiv1\). It \emph{raises risky demand universally} if \(\inf\D_F(s,Y) \geq \D(Y)\) for all \(s>0\) and \(Y\not\equiv1\).
\end{definition}

We say that \(F\) has \emph{no collapse} if \(\mathbb{P}_\mu\left(F_Z(r)<F_Z(t)\right)>0\) for all \(0<r<t\).

\begin{lemma}
\label{lem:random-no-collapse}
The following are equivalent: \(\textnormal{(i)}\) \(F\) has no collapse; \(\textnormal{(ii)}\) \(k_{\alpha,s}^{F}(a)>0\) for all \(a<1\), \(\alpha\in\left[0,1\right]\), and \(s>0\); and \(\textnormal{(iii)}\) \(k_{\alpha,s}^{F}(b)>0\) for all \(b>1\), \(\alpha\in\left[0,1\right]\), and \(s>0\).

If \(F\) has no collapse, then \(\D_F(s,Y)\) is singleton for every \(s>0\) and every \(Y\not\equiv1\).
\end{lemma}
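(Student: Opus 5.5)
The plan is to reduce everything to a pointwise sign analysis of the integrand defining \(h_{\alpha,s}^F\). Fix \(\alpha\in[0,1]\), \(s>0\), and \(y\neq1\). Since \(F_z\) is weakly increasing \(\mu\)-a.s., we have \(T_{z,s}(y)\le 1\) a.s. when \(y<1\) and \(T_{z,s}(y)\ge1\) a.s. when \(y>1\). The map \(x\mapsto\gamma_\alpha(x)=(x-1)/(1+\alpha(x-1))\) is strictly increasing on \((0,\infty)\); its derivative is \(1/(1+\alpha(x-1))^2>0\), and the denominator is positive because \(1+\alpha(x-1)=(1-\alpha)+\alpha x>0\). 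It also vanishes exactly at \(x=1\). Hence \(\gamma_\alpha(T_{Z,s}(y))\) has a weak sign a.s., and its expectation is strictly nonzero if and only if \(\mathbb{P}_\mu(T_{Z,s}(y)\neq1)>0\). Since \(\gamma_\alpha(y)\neq0\) and carries the same sign, \(k_{\alpha,s}^F(y)>0\) iff \(\mathbb{P}_\mu(F_Z(sy)\neq F_Z(s))>0\). By monotonicity this is \(\mathbb{P}_\mu(F_Z(sy)<F_Z(s))>0\) when \(y<1\), and \(\mathbb{P}_\mu(F_Z(s)<F_Z(sy))>0\) when \(y>1\). Note that this criterion does not depend on \(\alpha\).

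\textbf{Equivalences.} For (i)\(\Rightarrow\)(ii), given \(a<1\) and \(s\), set \(r=sa<t=s\). No collapse gives \(\mathbb{P}(F_Z(sa)<F_Z(s))>0\), so \(k^F_{\alpha,s}(a)>0\). For (ii)\(\Rightarrow\)(i), given \(0<r<t\), choose \(s=t\) and \(a=r/t<1\). Then \(k_{\alpha,t}^F(r/t)>0\) forces \(\mathbb{P}(F_Z(r)<F_Z(t))>0\). The equivalence (i)\(\Leftrightarrow\)(iii) is symmetric: take \(s=r\) and \(b=t/r>1\). The only care needed is that \(s\) ranges over all of \((0,\infty)\), so every pair \(r<t\) is reachable with either endpoint as the base.

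\textbf{Uniqueness.} Suppose there is no collapse and \(Y\not\equiv1\). The second derivative of \(G_F\) is
\[-\E_{p,\mu}\left[\frac{(T_{Z,s}(Y)-1)^2}{(1+\alpha(T_{Z,s}(Y)-1))^2}\right],\]
justified by the standing assumption that differentiation under the integral is valid. Since \(p\) has full support and \(Y\not\equiv1\), some atom \(y\neq1\) has positive probability. By the pointwise step above, \(\mathbb{P}_\mu(T_{Z,s}(y)\neq1)>0\), so the integrand is positive on a set of positive \(p\otimes\mu\) measure. The second derivative is therefore strictly negative on \([0,1]\), \(G_F\) is strictly concave, and the argmax over the compact interval is a singleton.

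\textbf{Main obstacle.} The delicate point is the endpoint \(\alpha=1\) with \(T\) near \(0\), where \(\gamma_1(x)=1-1/x\) is unbounded below. To avoid relying on integrability at that endpoint, I would argue strict concavity directly from strict concavity of \(\log\) along any chord. Whenever \(T_{Z,s}(Y)\neq1\), the integrand \(\alpha\mapsto\log(1+\alpha(T-1))\) is strictly concave on \([0,1]\), and this occurs with positive probability. Taking expectations preserves strict concavity, even if the objective equals \(-\infty\) at \(\alpha=1\). This gives uniqueness without appealing to the second derivative at all.
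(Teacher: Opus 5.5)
Your proposal is correct and follows essentially the same route as the paper. The equivalences come from the same pointwise sign analysis: \(k_{\alpha,s}^F(y)>0\) iff \(\mathbb{P}_\mu(F_Z(sy)\neq F_Z(s))>0\), independent of \(\alpha\), with \(s\) free to serve as either endpoint. Uniqueness comes from strict concavity of \(\alpha\mapsto\log(1+\alpha(T-1))\) on the positive-measure event \(\{T_{Z,s}(Y)\neq1\}\), which is the paper's chord argument; your derivation of that event's positive \(p\otimes\mu\) measure from full support of \(p\) and no collapse fills in a step the paper leaves implicit.
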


What this lemma says is that we can tolerate random transformations that flatten payoff differences sometimes, but not ones that erase any strict payoff comparison almost surely. As long as every strict comparison survives with positive probability, bad outcomes still push against risk, good outcomes still push toward risk, and the transformed objective has a unique optimum. The proof of this lemma, as well as the theorem that follows (\Cref{thm:random-scalar}), lie in \Cref{app:sec5proofs}.

\begin{theorem}
\label{thm:random-scalar}
\(F\) raises safe (risky) demand universally if and only if it has no collapse and, for every \(\alpha\in\left[0,1\right]\) and \(s>0\), \(k_{\alpha,s}^{F}(a)\ge \ (\leq) \ k_{\alpha,s}^{F}(b)\) for all \(a < 1 < b\).
\end{theorem}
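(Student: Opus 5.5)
The plan is to mirror the proof of \Cref{lem:score-single-crossing}, with \(h^F_{\alpha,s}\) in place of \(\gamma_\alpha\circ T^f_s\), and to use \Cref{lem:random-no-collapse} to handle nondegeneracy. I treat the safe-demand direction; the risky one is symmetric with the inequalities reversed.

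\emph{Necessity of no collapse.} Suppose \(F\) collapses. Then there exist \(r<t\) with \(F_Z(r)=F_Z(t)\) \(\mu\)-a.s. I will build \(Y\) so that the original optimum is interior but the transformed objective is flat in \(\alpha\), or prefers full risk. Take \(s=r\) and let \(Y\) put mass on \(1\)'s neighbour \(b=t/r>1\) together with a very bad outcome \(a<1\). The weights are chosen so that \(\D(Y)\) is interior. Under the transformation the good outcome \(b\) maps to \(1\) a.s., so the transformed score at \(b\) vanishes.

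This points the wrong way for safe demand, so I instead use the collapse in the other orientation. Set \(s=t\) and \(a=r/t<1\). The bad outcome is then neutralized, with \(T_{Z,t}(a)=1\) a.s. Let \(Y\in\{a,b\}\) for some \(b>1\), with the weight on \(a\) large enough that \(\D(Y)=0\) or is small. The transformed objective then only sees \(b\), weakly increasing in \(b\)'s contribution. Its derivative \(\E_p h^F\) is \(\ge0\), so \(\sup\D_F=1>\D(Y)\). The one caveat is when \(F_Z(tb)=F_Z(t)\) a.s. as well. In that case the objective is constant and \(\sup\D_F=1\) anyway, which again gives a contradiction. Hence no collapse is necessary.

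\emph{Sufficiency, and necessity of single crossing.} Assume no collapse. By \Cref{lem:random-no-collapse}, \(k^F>0\) off \(y=1\) and \(\D_F\) is a singleton. By concavity of \(G_F\) in \(\alpha\), the condition \(\D_F\le\D(Y)=\alpha^*\) holds iff \(G_F'(\alpha^*)\le0\) whenever \(\alpha^*<1\); when \(\alpha^*=1\) it holds trivially.

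For sufficiency, write \(G_F'(\alpha^*)=\sum_y p(y)k(y)\gamma_{\alpha^*}(y)\), where \(k=k^F_{\alpha^*,s}\), and set \(c=\inf_{b>1}k(b)\) over the support. Single crossing gives \(k(a)\ge c\ge k(b)\). Since \(\gamma\le0\) for \(a<1\) and \(\gamma\ge0\) for \(b>1\), we get \(G_F'(\alpha^*)\le c\,\E_p\gamma_{\alpha^*}(Y)\). At an optimum \(\alpha^*<1\) we have \(\E_p\gamma_{\alpha^*}\le0\). Together with \(c\ge0\) this gives \(G_F'(\alpha^*)\le0\).

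For necessity, suppose \(k(a)<k(b)\) for some \(\alpha\in[0,1]\), \(s\), and \(a<1<b\). Put two-point \(Y\in\{a,b\}\) with weights chosen so that \(p_a\gamma_\alpha(a)+p_b\gamma_\alpha(b)=0\). This makes \(\alpha\) the original optimum; it is possible since the \(\gamma\) values have opposite signs. Then \(G_F'(\alpha)=p_b\gamma_\alpha(b)(k(b)-k(a))>0\). For \(\alpha<1\) this implies \(\D_F>\alpha\), a contradiction.

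\emph{Main obstacle.} The hard case is \(\alpha=1\), where the optimum \(1\) cannot be exceeded. Here I would use continuity of \(k^F_{\alpha,s}\) in \(\alpha\), which follows from differentiation under the integral and dominated convergence. A strict violation at \(\alpha=1\) then persists at some \(\alpha<1\), and the argument above applies. The collapse construction needs similar care with edge cases, but it is elementary.
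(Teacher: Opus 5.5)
Your proof is correct up to one misstated step, and its architecture matches the paper's. You test degenerate risks for no collapse, separate the amplification ratios on bad and good realizations by a constant for sufficiency, and use a two-point distribution that makes \(\alpha\) the original optimum for necessity.

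The misstep is in sufficiency. You set \(c=\inf_{b>1}k(b)\) and then assert \(c\ge k(b)\). That fails unless \(k\) is constant on the good realizations. Yet the bound \(G_F'(\alpha^*)\le c\,\E_p\gamma_{\alpha^*}(Y)\) needs exactly \(k(b)\le c\), because \(\gamma_{\alpha^*}(b)>0\). Take \(c=\sup_{b>1}k(b)\) over the support, or any number between that and \(\inf_{a<1}k(a)\), and take \(c=0\) if there are no good realizations. This is the paper's \(\lambda\).

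There are two smaller differences. For no collapse, the paper uses the degenerate risk \(Y\equiv r/t\) at \(s=t\). Then \(\D(Y)=\{0\}\) while the transformed objective is identically zero, so \(\D_F(s,Y)=[0,1]\). Your two-point construction also works, but drop the abandoned attempt with \(s=r\).

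At \(\alpha=1\), the paper avoids continuity in \(\alpha\) altogether. It picks \(p\) on \(\{a,b\}\) with \(\E_p\gamma_1(Y)=0\) and then tilts \(p\) slightly toward \(a\). Then \(\E_p\gamma_1(Y)<0\), so the original optimum is below one, while \(\E_p h^F_{1,s}(Y)>0\) persists by linearity in \(p\), so the transformed optimum is at one. Your continuity route is also valid under the standing finiteness assumptions. Dominated convergence applies with dominating functions \(\left|\gamma_1\left(T_{Z,s}(a)\right)\right|\) and \(T_{Z,s}(b)-1\), since \(|\gamma_\alpha(x)|\) increases in \(\alpha\) for \(x<1\) and decreases for \(x>1\). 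The perturbation in \(p\), however, requires no interchange of limit and integral, which makes it the cleaner choice when the transformation is random.
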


This theorem is the random-transform version of \Cref{lem:score-single-crossing}. Again, we need the transformation to strengthen bad-realization reasons \textit{against} risk at least as much as good-realization reasons \textit{for} risk. The extra-ingredient--no collapse--ensures that these marginal forces always have bite.

\subsection{Outside Options}

Consider an independent \textit{ex-post} outside option \(B>0\), applied asset by asset. \textit{Viz.,} if an asset realizes gross return \(r\), the investor receives \(r\vee B\coloneqq\max\left\{r,B\right\}\). Equivalently, this is the random family \(F_B(r)=r\vee B\). In relative-return form,
\[
        T_{B,s}(y)\coloneqq\frac{sy\vee B}{s\vee B}
        =
        \frac{y\vee X}{1\vee X},
        \qquad \text{with} \qquad
        X\coloneqq\frac{B}{s}.
\]
We also assume that for all \(x > 0\), \(\mathbb{P}(B<x)>0\). Then, 

\begin{corollary}
\label{cor:outside-risky-demand}
An outside option raises risky demand universally.
\end{corollary}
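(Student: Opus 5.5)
The plan is to invoke \Cref{thm:random-scalar} in its risky-demand form. Applied to the family \(F_B(r)=r\vee B\), it reduces the claim to two checks: that \(F_B\) has no collapse, and that \(k_{\alpha,s}^{F}(a)\le k_{\alpha,s}^{F}(b)\) for all \(a<1<b\), \(\alpha\in[0,1]\), \(s>0\). Each \(F_B\) is weakly increasing, so \(F_B\) is an admissible random transformation.

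For no collapse, fix \(0<r<t\). On the event \(\{B<t\}\) we have \(t\vee B=t\) and \(r\vee B=\max\{r,B\}<t\). Hence \(\mathbb{P}(F_B(r)<F_B(t))\ge\mathbb{P}(B<t)>0\), using the support assumption on \(B\).

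For single crossing, I work with \(X=B/s\), which inherits \(\mathbb{P}(X<x)>0\) for all \(x>0\). I use one elementary fact: \(\gamma_\alpha\) is strictly increasing, since \(\gamma_\alpha'(y)=1/(1+\alpha(y-1))^2>0\), and \(\gamma_\alpha(1)=0\). I then compute \(h_{\alpha,s}^F\) by cases on \(X\).

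For \(a<1\) there are two cases. If \(X\ge1\), then \(T=1\) and the score is \(0\). If \(X<1\), then \(T=a\vee X\in[a,1)\), so \(\gamma_\alpha(a)\le\gamma_\alpha(a\vee X)\le 0\). Dividing by \(\gamma_\alpha(a)<0\) gives a ratio in \([0,1]\), so
\[k_{\alpha,s}^F(a)=\E\left[\frac{\gamma_\alpha(a\vee X)}{\gamma_\alpha(a)}\mathbbm{1}\{X<1\}\right]\le\mathbb{P}(X<1).\]

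For \(b>1\) there are three cases. If \(X<1\), then \(T=b\) and the ratio is exactly \(1\). If \(1\le X<b\), then \(T=b/X\ge1\), so the ratio is nonnegative. If \(X\ge b\), then \(T=1\) and the ratio is \(0\). Hence \(k_{\alpha,s}^F(b)\ge\mathbb{P}(X<1)\).

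Chaining the two bounds gives \(k_{\alpha,s}^F(a)\le\mathbb{P}(X<1)\le k_{\alpha,s}^F(b)\), which is the required reverse single-crossing. \Cref{thm:random-scalar} then yields that the outside option raises risky demand universally.

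There is no real obstacle here. The only care needed is the case bookkeeping for \(T_{B,s}\), especially at the boundary \(X=1\), which falls in the zero-score case for \(a\) and the nonnegative case for \(b\). I also need to confirm that the integrability and differentiation-under-the-integral assumptions of \secref{sec:random} hold. They do, because every transformed score is bounded between \(\gamma_\alpha(a)\) and \(\gamma_\alpha(b)\).
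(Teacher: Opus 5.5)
Your proposal is correct and follows essentially the same route as the paper: invoke \Cref{thm:random-scalar}, obtain no collapse from the support condition on \(B\), and verify the reverse single crossing by splitting on whether \(X=B/s\) lies below or above one. The differences are cosmetic. You actually prove no collapse, which the paper merely asserts. You also sandwich both ratios around \(\mathbb{P}(X<1)\), whereas the paper compares the two ratios realization by realization before averaging.
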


\begin{proof}
We verify the hypotheses of \Cref{thm:random-scalar}. By assumption, \(F\) has no collapse. We verify the risky-demand single-crossing condition. Fix \(s>0\), write \(X=B/s\), and fix \(a<1<b\). For a realization \(X=x\), if \(x\le1\), then \(T_{x,s}(b)=b\) and \(T_{x,s}(a)\in\left[a,1\right]\). Since \(\gamma_\alpha\) is increasing, negative below one, and zero at one,
\[
        0\le
        \frac{\gamma_\alpha\left(T_{x,s}(a)\right)}{\gamma_\alpha(a)}
        \le1
        =
        \frac{\gamma_\alpha\left(T_{x,s}(b)\right)}{\gamma_\alpha(b)}.
\]
If \(x>1\), then \(T_{x,s}(a)=1\) and \(T_{x,s}(b)\in\left[1,b\right]\), so
\[
        \frac{\gamma_\alpha\left(T_{x,s}(a)\right)}{\gamma_\alpha(a)}
        =0
        \le
        \frac{\gamma_\alpha\left(T_{x,s}(b)\right)}{\gamma_\alpha(b)}.
\]
Therefore, for every realization \(x\),
\[
        \frac{\gamma_\alpha\left(T_{x,s}(a)\right)}{\gamma_\alpha(a)}
        \le
        \frac{\gamma_\alpha\left(T_{x,s}(b)\right)}{\gamma_\alpha(b)}.
\]
Averaging over \(X\) yields
\[
        k_{\alpha,s}^{F}(a)\le k_{\alpha,s}^{F}(b)
        \qquad
        \forall a<1<b,\ \forall\alpha\in\left[0,1\right],\ \forall s>0.
\]
By \Cref{thm:random-scalar}, \(F_B\) raises risky demand universally.
\end{proof}

\subsection{Outside Options for the Rationally Inattentive}

In fact, a random \textit{ex-post} outside option also makes a rationally inattentive agent less risk averse. Let \(B\) be a real-valued random variable, independent of \(\theta\), suppose the relevant expectations are finite, and suppose \(\mathbb{P}(B<v)>0\) for all \(v\in\mathbb{R}\).

For \(u_0\in\mathbb{R}\) and \(u_R\not\equiv u_0\), define
\[
        \D_B^{RI}(u_0,u_R)\coloneqq
        \argmax_{\alpha\in\left[0,1\right]}
        \E_{\theta,B}\log\left((1-\alpha)e^{u_0\vee B}+\alpha \mathrm e^{u_R(\theta)\vee B}\right).
\]
Let \(\D_\iota^{RI}(u_0,u_R)\) denote the corresponding untransformed demand correspondence:
\[
        \D_\iota^{RI}(u_0,u_R)\coloneqq
        \argmax_{\alpha\in\left[0,1\right]}
        \E_\theta\log\left((1-\alpha)e^{u_0}+\alpha \mathrm e^{u_R(\theta)}\right).
\]
\begin{corollary}
\label{cor:ri-random-growth-floor}
An outside option raises risky-action demand universally:
\(\D_B^{RI}(u_0,u_R)\) is singleton, and \(\D_B^{RI}(u_0,u_R)\ge\D_\iota^{RI}(u_0,u_R)\) for all \(u_0\in\mathbb{R}\), for all \(\ u_R\not\equiv u_0\).
\end{corollary}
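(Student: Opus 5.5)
The plan is to reduce \Cref{cor:ri-random-growth-floor} to \Cref{cor:outside-risky-demand}, equivalently to \Cref{thm:random-scalar}, by passing to gross-return coordinates as in the proof of \Cref{cor:ri-growth-rate-transform}.

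\textbf{Change of variables.} Set \(s\coloneqq \mathrm e^{u_0}\), \(R_\theta\coloneqq \mathrm e^{u_R(\theta)}\) and \(\tilde B\coloneqq \mathrm e^{B}>0\). Since \(\exp\) is increasing, \(\mathrm e^{u\vee B}=\mathrm e^{u}\vee \tilde B\). Hence the objective defining \(\D_B^{RI}(u_0,u_R)\) equals
\[\E_{\theta,\tilde B}\log\left((1-\alpha)(s\vee\tilde B)+\alpha(R_\theta\vee\tilde B)\right).\]
Subtracting the \(\alpha\)-independent term \(\E\log(s\vee\tilde B)\), which is finite by assumption, leaves
\[\E\log\left(1+\alpha\left(T_{\tilde B,s}(Y)-1\right)\right), \qquad Y=R/s.\]
This is exactly \(G_F(\alpha;s,Y)\) for the family \(F_{\tilde B}(r)=r\vee\tilde B\). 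The same normalization shows \(\D_\iota^{RI}(u_0,u_R)=\D(Y)\). The condition \(u_R\not\equiv u_0\) gives \(Y\not\equiv 1\). Independence of \(B\) from \(\theta\) gives independence of \(\tilde B\) from \(Y\). The finiteness and differentiation-under-the-integral assumptions transfer directly.

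\textbf{Support condition.} Next I check the hypothesis of \Cref{cor:outside-risky-demand}. For every \(x>0\),
\[\mathbb P(\tilde B<x)=\mathbb P(B<\log x)>0,\]
by the assumption that \(\mathbb P(B<v)>0\) for all real \(v\).

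\textbf{Uniqueness of the demand.} For this I invoke \Cref{lem:random-no-collapse}, which requires no collapse. Given \(r<t\), the event \(\{\tilde B<r\}\) has positive probability, and on it \(F_{\tilde B}(r)=r<t=F_{\tilde B}(t)\). This is precisely the no-collapse argument behind \Cref{cor:outside-risky-demand}.

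\textbf{Conclusion.} \Cref{cor:outside-risky-demand} then gives \(\inf\D_F(s,Y)\ge\D(Y)\). Translating back yields \(\D_B^{RI}(u_0,u_R)\ge\D_\iota^{RI}(u_0,u_R)\) for every \(u_0\) and every \(u_R\not\equiv u_0\).

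The only delicate step is the bookkeeping that the normalization by \(s\vee\tilde B\) is legitimate, so that the argmax is unchanged. That needs \(\E\log(s\vee\tilde B)\) to be finite, which I would take from the stated ``relevant expectations are finite'' hypothesis. Everything else is a direct transfer of earlier results.
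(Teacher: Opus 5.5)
Your proposal is correct and follows the paper's proof. Like the paper, you pass to gross returns via \(\tilde B=\mathrm e^{B}\), use \(\mathrm e^{u\vee B}=\mathrm e^{u}\vee\tilde B\), check \(\mathbb P(\tilde B<x)=\mathbb P(B<\log x)>0\), and apply \Cref{cor:outside-risky-demand}; your extra bookkeeping (normalizing by \(s\vee\tilde B\), verifying no collapse explicitly, and obtaining uniqueness from \Cref{lem:random-no-collapse}) makes explicit what the paper's one-line reduction leaves implicit.
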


\begin{proof}
The transformation \(u\mapsto u\vee B\) induces the gross-return transformation \(r\mapsto \mathrm e^{\log r\vee B}=r\vee \mathrm e^B\). Let \(C\coloneqq \mathrm e^B\). Our support condition implies \(\mathbb{P}(C<t)=\mathbb{P}(B<\log t)>0\) for all \(t>0\), so we merely apply \Cref{cor:outside-risky-demand}.
\end{proof}

\section{Discussion}\label{sec:discussion}

Classical comparative risk aversion transforms Bernoulli utilities: the more risk-averse utility is an increasing concave transformation of the less risk-averse one. Stronger orders ask for robustness to richer comparisons, such as partial insurance and background risk. Prudence moves the transformation logic to marginal utility. I move the transformation elsewhere: which return scales make the same Kelly investor behave more conservatively in every one-safe-one-risky growth problem? 

I show that return transformations are not utility transformations in disguise. Although ordinary concavity of \(f\) is necessary for universal safe-demand increases, it is not sufficient. The transformation must also satisfy the harmonic curvature condition that
\(r\mapsto r/f(r)\) is convex, i.e., it must not only compress upside in the usual sense but also make original returns sufficiently convex when measured in transformed-return units. 

The paper is related to, but distinct from, the portfolio-dominance literature,\footnote{See, e.g., \citet{Gollier1995}, \citet{Gollier1997}, and \citet{Gollier2001}. \citet{DybvigWang2012} emphasize that robust portfolio-share comparative statics are fragile and instead study comparative statics for the distribution of portfolio payoffs.} which changes the distribution of a risky payoff and asks whether all agents in some preference class move in the same direction. Here the distributional environment is fixed. The transformation acts on every gross payoff, including the safe return, generating a new return scale on which the entire portfolio problem is evaluated. 
My exercise also differs from the standard portfolio-growth questions: I ask not which strategy grows fastest, and not how information changes the growth rate, but which cardinal transformations of returns move the safe share in a uniform direction.\footnote{See \citet{Kelly1956}, \citet{Breiman1961}, \citet{AlgoetCover1988}, \citet{Cover1991}, \citet{CoverOrdentlich1996}, and \citet{CoverThomas2006}. \citet{Samuelson1979} is a classic warning against treating log growth as a universal preference criterion, and \citet{MacLeanThorpZiemba2011} collect modern treatments of the Kelly criterion.}

\appendix

\section{Omitted Proofs}\label[appendix]{app:omit}

\subsection{Proof of \texorpdfstring{\Cref{lem:score-single-crossing,lem:endpoint-reduction}}{Lemmas}}\label[appendix]{app:firstlems}

\begin{proof}[Proof of \Cref{lem:score-single-crossing}]
We prove the safe-demand statement. In relative-return units, the safe asset is \(1\). By definition, \(T_s^f(1)=f(s)/f(s)=1\). Since \(f\) is strictly increasing, \(T_s^f(y)\lessgtr 1\) exactly when \(sy\lessgtr s\), equivalently, when \(y\lessgtr 1\).

Suppose the displayed single-crossing condition holds. Let \(Y\not\equiv1\), and let \(\alpha^*\) and \(\alpha_f^*\) be the unique maximizers of the original and transformed problems. If \(\alpha^*=1\), then \(\alpha_f^*\le1=\alpha^*\) (as no risky share can exceed \(1\)). 

If the support of \(Y\) lies weakly below one, then both optima are zero, so assume there are realizations on both sides of one. For \(\alpha^*<1\), concavity of the original objective implies \(\E_{p}\gamma_{\alpha^*}(Y)\le0\). Choose \(\lambda\) between the supremum of \(k_{\alpha^*,s}^f\) on the good realizations \(\{y>1\}\) and the infimum on the bad realizations \(\{y<1\}\). We may ignore the realization \(y=1\) as it contributes zero to both marginal scores. Since \(\gamma_\alpha(y)<0\) below one and \(\gamma_\alpha(y)>0\) above one,
\[
        k_{\alpha^*,s}^f(y)\gamma_{\alpha^*}(y)
        \le
        \lambda\gamma_{\alpha^*}(y)
        \qquad
        \forall y.
\]
Taking expectations:
\[
        \E_{p}\gamma_{\alpha^*}(T_s^f(Y))
        \le
        \lambda\E_{p}\gamma_{\alpha^*}(Y)
        \le0.
\]
Thus, the transformed objective is nonincreasing at \(\alpha^*\), so \(\alpha_f^*\le\alpha^*\).

Conversely, suppose the single-crossing inequality fails for some \(a<1<b\), \(s>0\), and \(\alpha_0\in[0,1]\):
\(k_{\alpha_0,s}^f(a)<k_{\alpha_0,s}^f(b)\). If \(\alpha_0=1\), then the function
\[\alpha\mapsto k_{\alpha,s}^f(a)-k_{\alpha,s}^f(b)\]
is continuous on \([0,1]\), because \(T_s^f(a)\) and \(T_s^f(b)\) are fixed positive constants and the denominators in \(k_{\alpha,s}^f\) are positive. Hence, the same strict failure holds for some \(\alpha<1\) close to one. If \(\alpha_0<1\), set \(\alpha=\alpha_0\). Choose probabilities on \(\{a,b\}\) such that
\(\E_{p}\gamma_\alpha(Y)=0\), so that  \(\alpha\) is the original optimum. Since \(\gamma_\alpha(a)<0<\gamma_\alpha(b)\) and \(k_{\alpha,s}^f(a)<k_{\alpha,s}^f(b)\),
\[
        \E_{p}\gamma_\alpha(T_s^f(Y))
        =
        \E_{p}\left[k_{\alpha,s}^f(Y)\gamma_\alpha(Y)\right]
        >0.
\]
The transformed optimum, therefore, lies strictly to the right of \(\alpha\), contradicting universal safe-demand increase.

The risky-demand proof is basically identical (\textit{mutatis mutandis}).\end{proof}

\begin{proof}[Proof of \Cref{lem:endpoint-reduction}]
Write \(T=T_s^f\). Since
\[
        k_{\alpha,s}^f(y)=
        \frac{T(y)-1}{y-1}
        \frac{1+\alpha(y-1)}{1+\alpha(T(y)-1)},
\]
the inequality \(k_{\alpha,s}^f(a)\ge k_{\alpha,s}^f(b)\) becomes
\[
        \frac{T(a)-1}{a-1}(1+\alpha(a-1))(1+\alpha(T(b)-1))
        \ge
        \frac{T(b)-1}{b-1}(1+\alpha(b-1))(1+\alpha(T(a)-1)).
\]
The difference is affine in \(\alpha\).\end{proof}

\subsection{Proof of \texorpdfstring{\Cref{cor:ri-c2-growth-rate-transform}}{Corollary}}\label[appendix]{ap:cor43proof}
\begin{proof}[Proof of \Cref{cor:ri-c2-growth-rate-transform}]
Let \(r= \mathrm{e}^u\) and \(g_\varphi(r) \coloneqq \mathrm{e}^{\varphi(u)}\). Then
\[
        g_\varphi''(r)=
        \frac{e^{\varphi(u)}}{r^2}
        \left[
        \varphi''(u)+\varphi'(u)^2-\varphi'(u)
        \right].
\]
Thus, \(g_\varphi\) is concave if and only if \(\varphi''(u)\le\varphi'(u)\left(1-\varphi'(u)\right)\),
and \(g_\varphi\) is convex if and only if the reverse inequality holds.

A similar direct differentiation tells us that \(r/g_\varphi(r)\) is convex if and only if \(\varphi''(u)\le-\varphi'(u)\left(1-\varphi'(u)\right)\), and \(r/g_\varphi(r)\) is concave if and only if the reverse inequality holds. 

We combine the two safe-demand inequalities to get
\[\varphi''(u)\le
        \min\left\{
        \varphi'(u)\left(1-\varphi'(u)\right),
        -\varphi'(u)\left(1-\varphi'(u)\right)
        \right\}
        =
        -\left|\varphi'(u)\left(1-\varphi'(u)\right)\right|.
\]
Likewise, for the two risky-demand inequalities. Finally, strict increase of \(\varphi\) is exactly strict increase of \(g_\varphi\).
\end{proof}

\subsection{Proofs of \texorpdfstring{\secref{sec:random}}{section} Results}\label[appendix]{app:sec5proofs}

We prove \Cref{lem:random-no-collapse} and \Cref{thm:random-scalar}.
\begin{proof}[Proof of \Cref{lem:random-no-collapse}]
For \(a<1\), monotonicity implies \(T_{z,s}(a)\le1\) a.s. Thus, \(\gamma_\alpha\left(T_{z,s}(a)\right)\le0\) a.s., with equality if and only if \(T_{z,s}(a)=1\). Since \(\gamma_\alpha(a)<0\),
\[
        k_{\alpha,s}^{F}(a)>0 \quad \iff \quad
        h_{\alpha,s}^{F}(a)<0
        \quad \iff \quad
        \mathbb{P}_\mu\left(T_{Z,s}(a)<1\right)>0
        \quad \iff \quad
        \mathbb{P}_\mu\left(F_Z(sa)<F_Z(s)\right)>0.
\]
Since \(s>0\) is arbitrary, this is exactly \(F\) having no collapse. The argument for \(b>1\) is symmetric.

Now suppose \(F\) has no collapse and \(Y \not\equiv 1\), so that
\(\mathbb P_{p,\mu}\left(T_{z,s}(Y)\neq 1\right)>0\). For each realization of \((Y,Z)\), the function
\[\alpha \mapsto \log\left(1+\alpha\left(T_{z,s}(Y)-1\right)\right)\]
is concave on \([0,1]\), and it is strictly concave whenever \(T_{z,s}(Y)\neq 1\). Hence, for any distinct
\(\alpha_0,\alpha_1\in[0,1]\) and any \(\lambda\in(0,1)\), taking
expectations:
\[
G_F\left(\lambda\alpha_0+(1-\lambda)\alpha_1;s,Y\right)
>
\lambda G_F(\alpha_0;s,Y)
+
(1-\lambda)G_F(\alpha_1;s,Y),
\]
i.e., \(G_F(\cdot;s,Y)\) is strictly concave, so \(\D_F(s,Y)\) is singleton.
\end{proof}

\begin{proof}[Proof of \Cref{thm:random-scalar}]
We prove the safe-demand statement. Suppose \(F\) has no collapse and \(k_{\alpha,s}^{F}(a)\ge k_{\alpha,s}^{F}(b)\) for every \(a<1<b\), \(\alpha\in\left[0,1\right]\), and \(s>0\). By \Cref{lem:random-no-collapse}, \(\D_F(s,Y)\) is singleton for every \(s>0\) and every \(Y\not\equiv1\).

Fix \(s>0\) and \(Y\not\equiv1\), and let \(\alpha\) be the unique element of \(\D(Y)\). If \(\alpha=1\), then every transformed maximizer is weakly below \(\alpha\). Hence, suppose \(\alpha<1\). By original optimality, \(\E\gamma_\alpha(Y)\le0\).

We claim that \(\E h_{\alpha,s}^{F}(Y)\le0\). If there are no bad realizations \(y<1\), then \(\E\gamma_\alpha(Y)\le0\) forces \(Y\equiv1\), which is excluded. If there are bad realizations but no good ones, then every transformed score is nonpositive, so \(\E h_{\alpha,s}^{F}(Y)\le0\).

Now suppose the support of \(Y\) has realizations on both sides of one. Choose \(\lambda > 0\) such that
\[
        \sup_{y>1}k_{\alpha,s}^{F}(y)\le\lambda\le
        \inf_{y<1}k_{\alpha,s}^{F}(y),
\]
ignoring the realization \(y=1\). Since \(\gamma_\alpha(y)<0\) below one and \(\gamma_\alpha(y)>0\) above one,
\[
        k_{\alpha,s}^{F}(y)\gamma_\alpha(y)\le\lambda\gamma_\alpha(y)
        \qquad
        \forall y\neq1
\]
on the support of \(Y\). The realization \(y=1\) contributes zero to both marginal scores. Therefore,
\[
        \E_p h_{\alpha,s}^{F}(Y)\le\lambda\E_p\gamma_\alpha(Y)\le0.
\]
Thus, \(G_F'(\alpha;s,Y)\le0\). Since \(G_F(\cdot;s,Y)\) is strictly concave, its unique maximizer lies weakly to the left of \(\alpha\). Hence, \(\sup \D_F(s,Y)\le \D(Y)\).

Conversely, suppose \(F\) raises safe demand universally. We first prove \(F\) has no collapse. If no collapse fails, weak monotonicity yields \(F_z(r)=F_z(t)\) a.s. for some \(0<r<t\). Set \(s=t\) and \(Y\equiv r/t<1\). Then \(\D(Y)=\left\{0\right\}\), while \(T_{z,s}(Y)=1\) a.s., so \(\D_F(s,Y)=\left[0,1\right]\). Hence, \(\sup\D_F(s,Y)=1>0= \D(Y)\), contradicting universal safe-demand increase.

It remains to prove the single-crossing inequality. If it fails, then for some \(s>0\), \(\alpha\in\left[0,1\right]\), and \(a<1<b\), \(k_{\alpha,s}^{F}(a)<k_{\alpha,s}^{F}(b)\).
Choose probabilities on \(\left\{a,b\right\}\) so that \(\E_p \gamma_\alpha(Y)=0\). Then \(\alpha\in\D(Y)\) and \(\E_p h_{\alpha,s}^{F}(Y)>0\). If \(\alpha<1\), strict concavity of the transformed objective gives it a unique maximizer strictly above \(\alpha\), contradicting universal safe-demand increase. If \(\alpha=1\), perturb the same two-point probabilities slightly toward the bad realization. Then \(\E_p \gamma_1(Y)<0\) while \(\E_p h_{1,s}^{F}(Y)>0\). The original maximizer lies below one, while the transformed objective is maximized at one, again contradicting universal safe-demand increase. Therefore, \(k_{\alpha,s}^{F}(a)\ge k_{\alpha,s}^{F}(b)\) for every \(a<1<b\), \(\alpha\in\left[0,1\right]\), and \(s>0\).

The risky-demand proof is the obvious mirror, \textit{mutatis mutandis}.
\end{proof}

\bibliography{sample}

\end{document}